\newtheorem{definition}{Definition}[section]
\newtheorem{theorem}{Theorem}[section]
\newtheorem{lemma}{Lemma}[section]
\numberwithin{equation}{section}
\let\emptyset\varnothing
\title{On The Dynamical Nature of Computation}
\author{Nabarun Mondal}
\address{D.E.Shaw \& Co. India, Hyderabad }
\email{mondal@deshaw.com}
\thanks{Nabarun Mondal : 
Dedicated to my missing geometry teacher Dr. Sushanta Mondal, in memory of DhrubaJyoti Ghosh and 
Dr. Prashanta Kumar Nandi,
my parents : Tapan and Sabita Mondal; 
Big thanks to : Abhishek Chanda and Shweta Bansal : You all have been constant support.
}
\author{Partha P. Ghosh}
\address{Microsoft India, Hyderabad }
\email{parthag@microsoft.com}
\thanks{ Partha. P. Ghosh : 
Dedicated to my parents and family, UJI , without their presence we are nothing.
}
\subjclass[2010]{
Primary 03D10;
Secondary 65P20,68Q05,68Q87,68T05}  
\begin{document}

\keywords{
Turing Machines ; Universal Computation ; Chaos ; With Probability One ; Aleph Numbers ;
Self Similarity ; Fractal ; Machine Learning ;  Deep Learning 
}

\begin{abstract}
Dynamical Systems theory generally deals with fixed point iterations
of continuous functions. Computation by Turing machine although is a fixed point iteration 
but is not continuous. This specific category of fixed point iterations can only be 
studied using their orbits. Therefore the standard notion of chaos is not immediately applicable. 
However, when a suitable definition is used, it is found that the notion of chaos and fractal sets exists 
even in computation. It is found that a non terminating Computation will be 
almost surely chaotic, and autonomous learning will almost surely identify fractal only sets.   
\end{abstract}

\maketitle

\begin{section}{Computation As Dynamical System}\label{intro}

In general we associate the term dynamical system to a fixed point iteration 
of a continuous function `$f$' as such:
$$
x_n = f(x_{n-1}).
$$

Turing Machines can be treated as a fixed point iteration with suitable rationalisation of the tape symbols $T$ with $T_n$ denoting the state of the tape at iteration `n':
$$
T_n = C(T_{n-1}) 
$$

Due to rationalisation $\rho(T) \in X_Q$ (definition \ref{rat}) of the tape $T$ to $X_Q$ of  definition \eqref{rat-s} a general computation like fixed point iteration (FPI) is therefore an arbitrary function 
of the following form :
\begin{equation}\label{gen-cc-fpi}
f : X_Q \to  X_Q
\end{equation}

But never the less, this pose a problem to analyse computation as a standard dynamical system, 
which in general is continuous nowhere.

The problem is to find generic dynamical properties of these sort of functions \eqref{gen-cc-fpi}. 
As really nothing is known about the function itself, 
we must then  seek through the characteristic of the possible orbits (definition \ref{orbit}) generated by the class of functions same as computation.

What sort of sequences they would be?
Sequences which can be generated by a function, that is :
$$
\forall x \; x  \to y \text{ and } x  \to z \implies y = z
$$ 
That is, same number (as input) can not produce  (different output),
multiple next numbers in a sequence. This evidently means no repetition of any element is allowed.
We note that these sequences are really all possible sequences those can be generated by any function, even non-computable ones having domain and range in $X_Q$.

\end{section}

\begin{section}{Properties of Sequences in $X_Q$}\label{ns}
We notice that the sequences in $X_Q$ can be either :
\begin{enumerate}
\item{Converging : or Cauchy sequences (CS) (definition \ref{cs}), 
the set is of such sequences would be designated as $\mathcal{C}$.}

\item{ Non Converging : or non Cauchy sequences (NCS), the set is to be called $\mathcal{N}$. } 
\end{enumerate}
 
Every non converging sequence is by definition a mixture of multiple Cauchy subsequences,
which is a trivial thing to demonstrate. Therefore a CS is a special case of NCS having only single 
Cauchy subsequence in it. We shall see that the CS provides certain kind of ``basis'' for the sequences space.

\begin{definition}\label{ss}
\textbf{Sequences Space.}

Given any arbitrary infinite sequence is a mixture of 
$n$ Cauchy sequences $C_i$, 
we can represent the family of such sequences as:
$$
L = (a^1, a^2, ..., a^n)
$$ 
where $a^i$ is the accumulation point of the Cauchy Sequence $C_i$,
with $a^i \le a^j $. 
All sequences mixed upto n different Cauchy Sequences can be represented
in this space.   
Any Cauchy Sequence accumulating into point `$a$' can be represented as :
$$
L_a = (a, a, a,..., a)  
$$  
that is all \emph{co-ordinates} having the same value.
\end{definition}

This definition brings about dimensionality of an arbitrary sequence.
A sequence family having only one Cauchy sequence is one dimensional.
Any finite sequence is zero dimensional.
Any other sequence family is multi dimensional.
Thus, number of accumulation points in a sequence defines the dimensionality.

This geometrical notion brings about a measure depicting hyper-volume
over the sequences space which generalises Lebesgue measure. 

\begin{definition}\label{ms}
\textbf{Measure over Sequences Space.}

Define $V( x_2^k , x_1^k )$ as the hyper volume bounded by surfaces $x^k = x_1^k , x^k = x_2^k$,
with $x_1^k,x_2^k \in \mathbb{R}$.
Suppose $y_k = (x_1^k, x_2^k)$ with $y = ( y_1, y_2, ... y_n )$.
Define $V_n(y)$ as the measure of the set of sequences $y \subseteq \mathbb{R}^n $ in n dimension:
$$
V_n(y) = \bigcap_{k=1}^n  V(x_1^k , x_2^k ,  )
$$ 
\end{definition}

As an example let's restrict the universe of sequences allowing upto the mixing of 2 or less Cauchy sequences. It is obvious that in two dimension ( space where mixing upto two Cauchy Sequences are allowed) the volume is a triangle inside a square and the strict Cauchy Sequences are lying in the diagonal.
Thus the Cauchy sequences gets a zero measure. 
Increasing $n$ does not change the volume of the pure Cauchy Sequences, it remains 0.
This observation brings the next theorem. 
\begin{theorem}\label{cmz}
\textbf{Pure Cauchy Sequences are of Measure Zero.}

The measure of definition \ref{ms} dictates that the pure Cauchy Sequences $\mathcal{C}$ is a null set.
\end{theorem}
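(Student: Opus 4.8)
The plan is to identify the set $\mathcal{C}$ of pure Cauchy sequences with the diagonal of the sequences space and then show that this diagonal sits inside a lower dimensional affine subspace, which forces its hyper volume to vanish. By Definition \ref{ss}, a pure Cauchy sequence accumulating to $a$ is represented by $L_a = (a, a, \ldots, a)$, so $\mathcal{C}$ corresponds exactly to the diagonal
\[
\Delta = \{ (a, a, \ldots, a) : a \in \mathbb{R} \} \subseteq \mathbb{R}^n .
\]
First I would fix the dimension $n \ge 2$ and record that a point of the sequences space lies in $\Delta$ precisely when all of its coordinates coincide, i.e. when $x^1 = x^2 = \cdots = x^n$. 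In particular $\Delta$ is contained in the single hyperplane $H = \{ x \in \mathbb{R}^n : x^1 = x^2 \}$, an $(n-1)$-dimensional affine subspace.

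Next I would estimate the measure $V_n$ of Definition \ref{ms} of a thin neighbourhood of $\Delta$ and let its thickness tend to zero. Using the box structure of $V_n$ (the intersection $\bigcap_k V(x_1^k, x_2^k)$ of coordinate slabs is the hyperrectangle $\prod_k [x_1^k, x_2^k]$, whose measure is $\prod_k (x_2^k - x_1^k)$), I would cover the portion of $\Delta$ lying in a bounded window $[-M, M]^n$ by boxes of side $\epsilon$. Because $\Delta$ is one dimensional, only order $M/\epsilon$ such boxes are required, while each contributes volume $\epsilon^n$; hence the total covering measure is of order $M \epsilon^{n-1}$, which tends to $0$ as $\epsilon \to 0$ for every $n \ge 2$. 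Letting $M \to \infty$ afterwards yields $V_n(\mathcal{C}) = 0$. This is precisely the generalisation of the worked example, where for $n = 2$ the admissible region is the triangle inside the square and $\Delta$ is its diagonal, a one dimensional segment of planar measure zero.

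The main obstacle I expect is not the covering count but reconciling the ambient object with the measure: I must verify that $V_n$ of Definition \ref{ms}, when restricted to the ordered admissible region (the simplex $a^1 \le a^2 \le \cdots \le a^n$ rather than all of $\mathbb{R}^n$), still agrees with, or at least is dominated by, Lebesgue measure, so that the vanishing of Lebesgue measure on $\Delta$ transfers to $V_n$. I would handle this by observing that intersecting with the simplex only shrinks the sets being measured, so the covering bound above remains a valid upper bound for $V_n(\mathcal{C})$.

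Finally I would address uniformity in $n$, since the assertion is that the measure stays $0$ as the number of admissible Cauchy subsequences grows. The bound $M \epsilon^{n-1}$ is monotone in $n$ for $\epsilon < 1$, so the conclusion holds for all finite $n$ simultaneously and survives the passage $n \to \infty$. Hence $\mathcal{C}$ is a null set, as claimed.
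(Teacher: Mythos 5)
Your proposal takes essentially the same route as the paper: both identify $\mathcal{C}$ with the diagonal $\{(a,a,\ldots,a)\}$ of the $n$-dimensional hypercube and conclude that this one-dimensional set has zero hyper-volume in every dimension $n \ge 2$. The paper simply asserts that ``a line has zero volume,'' whereas you supply the standard $O(M/\epsilon)$ box-covering bound of total volume $O(M\epsilon^{n-1})$ to justify it, which is a welcome but not conceptually different elaboration.
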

\begin{proof}
We notice that measure of the pure Cauchy Sequences $(a,a,a,...,a)$ is the volume of a diagonal 
of an $n$ dimensional hyper cube. A line has a zero volume in every dimension, 
and therefore, irrespective of the dimensionality, 
the set of Cauchy Sequences $\mathcal{C}$ is a measure null set. 

\end{proof}

Now, given that the measure of $\mathcal{C}$ is zero,  given a sequence is picked 
from the set of sequences (which is almost entirely $\mathcal{N}$ ) at random, 
we must have a sequence from $\mathcal{N}$, almost surely.

Therefore, a general observation is general FPI orbit would exhibit mixing of sequences. 

\begin{theorem}\label{fpim}
\textbf{Fixed Point Iterations almost surely would yield Non Cauchy Sequences.}

Almost all orbits resulting from fixed point iterations as of equation \eqref{gen-cc-fpi} 
would be Non Cauchy.  
\end{theorem}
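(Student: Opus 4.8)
The plan is to obtain this statement as a direct consequence of Theorem \ref{cmz}, combined with the identification — established in Section \ref{intro} — between orbits of fixed point iterations and arbitrary non-repeating sequences in $X_Q$. The key observation is that any orbit generated by a function $f : X_Q \to X_Q$ of the form \eqref{gen-cc-fpi} is simply a sequence of elements of $X_Q$, and hence is a single point of the sequence space $L$ of Definition \ref{ss}. Thus the space of all possible orbits coincides with the domain on which the measure $V_n$ of Definition \ref{ms} is defined, and selecting an orbit ``at random'' amounts to sampling a sequence according to this measure.

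First I would make the probabilistic model explicit, equipping the sequence space $L$ (restricted to a bounded region so that the measure can be normalised) with $V_n$, read as a probability measure after normalisation. Second, I would invoke Theorem \ref{cmz}, which asserts that the set $\mathcal{C}$ of pure Cauchy sequences is a null set under $V_n$, lying as it does on the diagonal of the $n$-dimensional hypercube. Third, since $\mathcal{C}$ is null, its complement $\mathcal{N}$ carries full measure, so the probability of drawing a sequence from $\mathcal{N}$ equals one. Finally, translating back through the identification of orbits with sequences, almost every orbit of a fixed point iteration \eqref{gen-cc-fpi} lies in $\mathcal{N}$, that is, is Non Cauchy with probability one.

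The delicate point — and the step I expect to be the main obstacle — is the precise meaning of ``at random'' over the class of fixed point iterations. The measure $V_n$ is attached to sequences of a fixed dimension $n$, yet distinct iterations may produce orbits with different numbers of accumulation points, so one must argue that the null-set conclusion is stable across all dimensions. Here the remark preceding Theorem \ref{cmz} is essential: increasing $n$ leaves the measure of the pure Cauchy set at zero, so the diagonal remains null in every dimension, and no redistribution of mass across dimensions can rescue $\mathcal{C}$. Provided the measure induced on orbits is absolutely continuous with respect to $V_n$ — equivalently, that the sampling of iterations does not artificially concentrate on the diagonal — the full-measure conclusion for $\mathcal{N}$ transfers verbatim to the orbits, completing the argument.
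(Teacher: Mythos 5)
Your proposal follows essentially the same route as the paper: the paper's proof is literally the one-line remark ``See the discussion before,'' and that discussion is exactly your argument --- orbits of \eqref{gen-cc-fpi} are sequences in $X_Q$, Theorem \ref{cmz} makes the pure Cauchy sequences $\mathcal{C}$ a null set under the measure of Definition \ref{ms}, so a randomly drawn orbit lies in $\mathcal{N}$ almost surely. Your version is in fact more careful than the paper's, since you explicitly flag the normalisation of the measure and the dimension-dependence of $V_n$, neither of which the paper addresses.
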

\begin{proof}
See the discussion before.
\end{proof}

\end{section}

\begin{section}{Chaotic Sequences in Computation}\label{chaos-s}

We generally attribute randomness to non determinism.
But, a sequence which is generated by a non computable function, is algorithmically 
random by having undefined Chaitin-Solmonoff-Kolmogorov ( CSK ) complexity \cite{mlpv}.
The difference of this algorithmically random behaviour from \emph{non determinism} 
is that in the generated sequence both $x$ followed by $y$ and $x$ followed by $z$ 
with $y \ne z$ can not be present.
Because, given same input a function can not generate different output. 

Then, we can argue that though it is random, but still it is deterministic, because 
there is a deterministic function, albeit non computable,  whose orbit is that sequence,
and the sequence can not be algorithmically computable.

Even computable sequences, if we do not know the computable function generating it
can be deterministic, and random like : unpredictable. We argue that is what chaos should mean 
in context of computation : loss of long term predictability. 
We must first redefine chaos for the sequences such that our definition 
which is valid for a single sequence  matches up the definition of a chaotic trajectory
in the literature.

After carefully considering the possibilities we put forward the following 
definition for chaotic sequences:

\begin{definition}\label{chaos-s}
\textbf{Chaotic Sequences.}

A sequence $<x_k>$ with $x_k \in X_Q$ generated by a function $f$ is said to be chaotic iff 
it exhibits \textbf{Sensitive dependence on initial condition}:
$$
\forall x_i, x_j \in <x_k> \text { such that } |x_i - x_j| < \epsilon
$$
there exists $n > 0 $ such that:
$$
|x_{i+n} - x_{j+n}| > \epsilon
$$ 
holds, where $x_{k+n} = f^n(x_k)$.
\end{definition}

In essence, the defined property ensures that  
even with full knowledge of the past history of the sequence near x, 
there is no way to predict future behaviour of the sequence.
This is also the very essence of chaos and chaotic orbit.
This is only applicable for the specific orbit, and not the function itself, 
as there is no information about the function.

We now argue that almost surely a sequence picked at random 
from $\mathcal{N}$ would be chaotic.

\begin{theorem}\label{ncc}
\textbf{Almost Surely Non Cauchy Sequences are Chaotic.}

Almost all Non Cauchy sequences are chaotic as definition \ref{chaos-s}.  
\end{theorem}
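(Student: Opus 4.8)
The plan is to show that the set of non-chaotic sequences within $\mathcal{N}$ has measure zero, so that by Theorem \ref{cmz} and the almost-sure picking argument established just after it, a randomly chosen Non Cauchy sequence is chaotic with probability one. Concretely, I would negate Definition \ref{chaos-s}: a sequence $\langle x_k\rangle$ fails to be chaotic exactly when there exist indices $i,j$ with $|x_i - x_j| < \epsilon$ for which \emph{every} forward shift stays close, i.e. $|x_{i+n} - x_{j+n}| \le \epsilon$ for all $n > 0$. The strategy is to argue that this ``eventual non-separation'' condition forces the two tails $\langle x_{i+n}\rangle$ and $\langle x_{j+n}\rangle$ to share the same accumulation structure, which collapses the effective dimensionality of the sequence and pins it onto a lower-dimensional surface in the sequences space of Definition \ref{ss}.

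The key steps, in order, are as follows. First I would fix a non-chaotic sequence and use the negated definition to identify a pair of nearby coordinates whose orbits never separate beyond $\epsilon$; this says the difference sequence $\langle x_{i+n} - x_{j+n}\rangle$ is bounded in a tube of width $\epsilon$. Second, I would argue that such confinement, ranging over all choices of $\epsilon$, means the two tails accumulate to the same point, so the accumulation points of the full sequence coincide in pairs. Third, translating this into the geometry of Definition \ref{ms}: the non-separation constraint is an equality-type relation $a^i = a^j$ among accumulation points, which is precisely a diagonal-type surface $x^k = x^{k'}$ inside the hypercube. Fourth, I would invoke the same computation used in Theorem \ref{cmz} — that any such diagonal hyperplane carries zero hyper-volume under $V_n$ — to conclude that the non-chaotic sequences lie in a finite union of measure-zero surfaces, hence form a null set. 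Combining this with the fact that $\mathcal{N}$ itself has full measure (Theorem \ref{cmz}) gives the almost-sure claim.

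The main obstacle, I expect, will be rigorously connecting the orbit-level non-separation property of Definition \ref{chaos-s} to the coordinate-level geometry of the sequences space in Definition \ref{ss}. The chaos definition quantifies over \emph{shifts} $f^n$ of a single one-dimensional trajectory, whereas the measure lives on the multi-dimensional representation by accumulation points; bridging these requires a careful claim that ``nearby coordinates never separating'' is equivalent to ``their limiting accumulation points coincide.'' I would need to handle the quantifier order honestly: Definition \ref{chaos-s} demands separation for \emph{every} close pair, so failure of chaos needs only \emph{one} persistently close pair, and I must verify that even a single such pair forces a genuine collapse onto a measure-zero surface rather than merely a local coincidence. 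A secondary subtlety is that the measure $V_n$ was defined over $\mathbb{R}^n$ while the sequences live in $X_Q$; I would argue, as the paper does implicitly for Theorem \ref{cmz}, that the rationalised embedding does not alter the null-set conclusion since a lower-dimensional surface remains null under any reasonable refinement of the ambient dimension.
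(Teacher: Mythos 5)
Your proposal has a genuine gap at its central step. You claim that failure of sensitive dependence --- a pair $x_i,x_j$ with $|x_{i+n}-x_{j+n}|\le\epsilon$ for all $n$ --- forces the accumulation points of the sequence to coincide in pairs, so that the non-chaotic sequences sit on diagonal surfaces $a^i=a^j$ of measure zero. This is false. Take a strict alternation of two Cauchy subsequences with distinct limits $a_1\ne a_2$ and $|a_1-a_2|\gg\epsilon$: any two terms within $\epsilon$ of each other must (eventually) have indices of the same parity, every forward shift preserves that parity, and so the pair remains within $\epsilon$ forever. The sequence is non-chaotic under Definition \ref{chaos-s}, yet its accumulation points are genuinely distinct --- it lies \emph{off} the diagonal. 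More structurally, the measure $V_n$ of Definition \ref{ms} sees a sequence only through its tuple of accumulation points, and for any fixed tuple of distinct points there exist both chaotic and non-chaotic sequences realising it (depending on the \emph{order} in which the neighbourhoods are visited). So no argument phrased purely in the accumulation-point geometry can isolate the non-chaotic set; the quantity that distinguishes chaos here is combinatorial, not geometric.

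The paper's own proof takes exactly this combinatorial route: it observes that the only way to avoid sensitive dependence is for the sequence to visit the neighbourhoods $A_k$ of its accumulation points in one fixed order forever, assigns a per-step probability $p$ to maintaining that order, and argues that $p^N\to 0$ unless $p=1$, which is itself a Lebesgue-null choice of $p$. (That argument is itself informal --- no probability measure on the space of sequences is actually constructed --- but it at least targets the correct set.) To repair your proposal you would need to replace the diagonal-collapse claim with an argument that the ``eventually periodic visiting order'' sequences form a null set in whatever measure you place on $\mathcal{N}$, which is a different and harder bookkeeping problem than the one Theorem \ref{cmz} solves.
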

\begin{proof}

Suppose the sequence is a result of mixing of $n$ Cauchy sequences 
with $C_k$ is accumulating to $a_k$.
Then, the final sequence eventually would be trapped in the neighbourhoods:
$A_k = \{ x : |a_k - x | < \epsilon \} $. We note that the $|a_i - a_j| >> \epsilon $
because if it was not, there is no way to isolate $a_i,a_j$ and hence $A_i,A_j$
should be merged into one, by reducing the number $n$ of Cauchy Sequences.

Now, to demonstrate the sensitive dependence we argue the opposite.
The \emph{only} way not to get sensitive dependency is that the sequence
visits the neighbourhoods $A_k$ in order eventually, forever. 
For clearly if it would ever ``break the order'', as $|a_i - a_j| >> \epsilon$
the sensitive dependency would be immediate.  

So, this visiting the $A_k$s in order amounts to choosing one order among many
possible variations. Suppose, for a single iteration the probability 
that it would pick a certain order be $p$.
Clearly $0\le p \le 1$, and probability of the sequence always preserves the same order 
in $N$ iteration is $p^N$.
Obviously then, over infinite iteration it is an almost never event under the assumption
that $p \ne 1$. We argue that $p \ne 1$ is a reasonable assumption to hold.
The probability of picking a specific number $x$ from a range $[a,b]$
is zero under Lebesgue measure.
As this $p$ can take any value between $[0,1]$ to get into no sensitive dependence 
we need to get a precise $p=1$, but from previous discussion that assignment of probability 
is an almost never event. 

Hence the other event ``sometimes the sequence breaks order'' would be the almost sure event.
And that would immediately imply sensitive dependence as discussed earlier.

Thus, almost all sequences in $\mathcal{N}$ are chaotic (as in definition \ref{chaos-s}).   
\end{proof}
Or rather, $\mathcal{N}$ is almost entirely chaotic.
Which results in the next theorem.

\begin{theorem}\label{chaos-fpi}
\textbf{Function iteration in $X_Q$ is Almost Surely Chaotic.}

Almost all orbits of the function iteration of the type of equation \eqref{gen-cc-fpi}
are chaotic.
\end{theorem}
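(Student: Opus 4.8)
The plan is to chain together the two previously established almost-sure results into a single conclusion about fixed point iteration orbits. The target statement, Theorem \ref{chaos-fpi}, asserts that almost all orbits of an FPI of the form \eqref{gen-cc-fpi} are chaotic in the sense of Definition \ref{chaos-s}. Since the orbits are precisely the sequences in $X_Q$, and these sequences partition into the Cauchy class $\mathcal{C}$ and the non-Cauchy class $\mathcal{N}$, the natural strategy is a two-stage elimination: first discard $\mathcal{C}$ as a negligible set, then show that within the surviving set $\mathcal{N}$ the chaotic sequences exhaust it almost entirely.

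First I would invoke Theorem \ref{fpim}, which already states that almost all orbits of an FPI of the form \eqref{gen-cc-fpi} land in $\mathcal{N}$; this rests on Theorem \ref{cmz}, where the measure of Definition \ref{ms} assigns the diagonal set $\mathcal{C}$ volume zero. So with probability one a randomly selected orbit is a non-Cauchy sequence. Second, I would invoke Theorem \ref{ncc}, which establishes that almost all sequences in $\mathcal{N}$ are chaotic in the sense of sensitive dependence on initial conditions. The conclusion then follows by composing these two almost-sure events: conditioned on the orbit lying in $\mathcal{N}$ (an event of probability one), the orbit is chaotic with probability one, so unconditionally the orbit is chaotic with probability one. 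Formally, if $E_1$ is the event ``orbit $\in \mathcal{N}$'' and $E_2$ is the event ``orbit is chaotic,'' then $\Pr(E_1) = 1$ and $\Pr(E_2 \mid E_1) = 1$ give $\Pr(E_2) \ge \Pr(E_2 \cap E_1) = \Pr(E_2 \mid E_1)\Pr(E_1) = 1$.

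The step I expect to be the main obstacle is not any new computation but the reconciliation of the two measure-theoretic frameworks implicitly in play. Theorem \ref{fpim} measures orbits through the hyper-volume measure $V_n$ of Definition \ref{ms} on the sequence space, declaring $\mathcal{C}$ null there; Theorem \ref{ncc} argues its almost-sure statement through a separate Lebesgue-style argument on the order-preservation probability $p$. To splice the two cleanly I would need both ``almost surely'' statements to refer to the same underlying probability measure on orbits, or at least to verify that the null sets excluded by each are jointly null under a common refinement. The honest version of this proof therefore hinges on asserting that the measure of Definition \ref{ms} is the measure against which both \emph{almost surely} claims are taken, so that a countable (in fact twofold) union of null sets remains null and the surviving set of chaotic orbits still carries full measure.

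Given the structure of the preceding development, I would keep the proof short, essentially a one-line citation: the result is an immediate corollary of Theorem \ref{fpim} and Theorem \ref{ncc}. The only genuine content to spell out is the composition of the two full-measure events, observing that the intersection of two sets of full measure again has full measure, hence the set of chaotic orbits is almost everything.
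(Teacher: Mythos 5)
Your proposal matches the paper's proof exactly: the paper simply cites Theorem \ref{fpim} and Theorem \ref{ncc} and declares the result immediate, which is precisely the two-stage composition of almost-sure events you describe. Your additional remark about needing a common underlying measure to splice the two ``almost surely'' claims is a fair observation, but it goes beyond what the paper itself addresses.
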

\begin{proof}
Using theorem \ref{fpim} and theorem \ref{ncc} the result immediately follows. 
\end{proof}

Thus, we establish that the computation like function iterations are 
almost certainly chaotic.

Now, in the light of computation let's analyse the same.
Suppose a computation does not terminate.
Then we are either in a loop, or not in a loop.
If we are in a loop, we can terminate trivially using another debugger machine
which maintains the state of the debugee in a separate table.

If we can never terminate, then, as we are not in a loop, 
we are either converging to a point in $\mathbb{R}$,
or we are not converging to any single point: we are having multiple accumulation points.
This brings the next theorem.

\begin{theorem}\label{chaos-c}
\textbf{Universal Computation is Almost Surely Chaotic.}

The sequence $c = <c_n>$ generated by rationalisation (definition \ref{rat}) 
of the tape of a Turing machine is almost surely chaotic in nature. 
\end{theorem}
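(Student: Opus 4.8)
The plan is to recognise the sequence $c = <c_n>$ as nothing more than a concrete orbit of a computation-like fixed point iteration of the form \eqref{gen-cc-fpi}, and then invoke the machinery already established. First I would make explicit that, since $T_n = C(T_{n-1})$ and $c_n = \rho(T_n)$ under the rationalisation of definition \ref{rat}, the sequence $<c_n>$ is exactly the orbit of the induced map $f = \rho \circ C \circ \rho^{-1} : X_Q \to X_Q$, which is precisely a function of the form \eqref{gen-cc-fpi}. Hence $<c_n>$ is a legitimate orbit to which the earlier analysis applies verbatim.

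Next I would carry out the case analysis sketched in the discussion preceding the statement. If the machine halts, the orbit is a finite, and therefore zero-dimensional, sequence, and the chaotic claim is trivially vacuous. If the machine fails to halt, it is either eventually periodic (a loop) or not. A loop can be detected and terminated by an auxiliary machine that records visited tape-states, so this case also reduces to a finite orbit. The remaining, and generic, case is a non-halting, non-looping computation, whose orbit must either converge to a single accumulation point in $\mathbb{R}$, placing it in $\mathcal{C}$, or possess several accumulation points, placing it in $\mathcal{N}$.

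With the orbit confined to $\mathcal{C} \cup \mathcal{N}$, I would close the argument by appealing to Theorem \ref{chaos-fpi}: by Theorem \ref{fpim} the orbit is almost surely Non-Cauchy, and by Theorem \ref{ncc} such a Non-Cauchy orbit is almost surely chaotic in the sense of definition \ref{chaos-s}. Composing these two almost-sure statements yields almost-sure chaos for $<c_n>$, as required.

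The hard part will be justifying the ``almost surely'' quantifier in this specifically computational setting. The measure-theoretic results of Theorems \ref{cmz}--\ref{ncc} live on the full space of sequences in $X_Q$ generated by \emph{arbitrary}, including non-computable, functions, whereas the Turing-machine orbits form only a countable subfamily, itself of measure zero in that ambient space. I would therefore need to argue that the null-set structure, namely the pure Cauchy sequences $\mathcal{C}$ and the order-preserving visits to the neighbourhoods $A_k$, still carries negligible weight when the randomisation is taken over the choice of machine and input rather than over the full sequence space. This transfer of the genericity statement from the continuous ambient measure to the discrete computational ensemble is the genuine obstacle, and the case analysis above is precisely what lets us reframe it as a clean statement about membership in $\mathcal{N}$.
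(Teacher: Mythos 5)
Your case analysis (halting orbits are finite and null; looping orbits reduce to finite ones via a state-recording debugger; the interesting case is the non-halting, non-looping computation) matches the paper's proof exactly, and your final step of feeding membership in $\mathcal{N}$ into Theorem \ref{ncc} is also the paper's closing move. But you have correctly diagnosed, and then left open, the one step that actually carries the weight of the theorem: showing that the orbit of a \emph{computation} is almost surely not in $\mathcal{C}$. You propose to get this from Theorem \ref{fpim}, and then immediately concede that Theorems \ref{cmz}--\ref{fpim} measure genericity in the ambient space of all sequences, where the computable orbits form a countable, hence measure-zero, subfamily --- so the ambient ``almost surely'' says nothing about a sequence already known to lie in that null subfamily. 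Flagging this as ``the genuine obstacle'' is honest, but it leaves the proof incomplete: you never supply the transfer argument.

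The paper does not attempt that transfer at all; instead it replaces Theorem \ref{fpim} with a computation-specific argument exploiting the structure of rationalisation (definition \ref{rat}). Since $\rho(T)$ reads the tape as the digit string after the radix point, the sequence $c_n = \rho(T_n)$ can converge only if, eventually, every symbol the machine writes lies at the extreme right of the (ever-lengthening) inscribed portion of the tape, i.e.\ only the least significant digits change. Assigning a probability $p_R \in [0,1]$ to the event of writing at the extreme right on a given step, convergence requires $\lim_{n\to\infty} p_R^n \ne 0$, hence $p_R = 1$ exactly, which is a Lebesgue-null choice of $p_R$; therefore $c \notin \mathcal{C}$ almost surely. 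Whatever one thinks of the rigor of that probabilistic modelling, it is the ingredient your proposal is missing: a reason, grounded in how Turing-machine tapes map into $X_Q$, why the orbit avoids $\mathcal{C}$. Without it, your argument reduces to citing a genericity statement that does not apply to the subfamily in question.
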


\begin{proof}
We note that a terminating computation has a finite trajectory resulting in a finite sequence, 
so clearly they comprise of null set. A non terminating computation 
which yields finally a loop can be halted by maintaining the previous states in a separate table.
So, a computation ending in a loop is also finite in effect, and they also can be taken 
as null set because of finiteness.

Due to the nature of rationalisation (definition \ref{rat})
the only way $c$ can converge if the only symbols which gets changed (or written) are 
on the extreme right  of the tape eventually. 
Assigning a probability to the event of changing (or writing) one symbol at 
the extreme right of the tape as $0 \le p_R \le 1$,
we notice that $p_R = 1$ is required to meet the criterion of the final probability 
$\lim_{n \to \infty }p_R^n \ne 0$. 
This is not possible under the reasonable assumption that the $p_R$ is
taken at random from $[0,1]$ and the probability measure used is Lebesgue which is 0
for an isolated point. Thus, in the case of computation almost surely $c \not \in \mathcal{C}$. 

Then, almost surely $c \in \mathcal{N}$. Now, as $\mathcal{N}$ is almost entirely chaotic
(by theorem \ref{ncc}), the same argument also is applicable here, 
and hence $c$ must be almost surely chaotic.   

\end{proof}

We notice that this is similar to the problem of picking an arbitrary number from $[0,1]$
and asking what sort of number it would be? It clearly can not be 
a rational number, hence it must be irrational. But irrational numbers are almost entirely 
transcendental, so the number picked would be almost surely transcendental. 

\end{section}

\begin{section}{Learning Functions and Sets }\label{rgf}

In this section we do the opposite of the generation of sequences:
we try to learn from the sequence what sort of function would have generated it.
Or, in fact we are trying to learn the indicator function $1_A$ which was used to generate 
the set $A$ from which the numbers are being drawn.

What we want is to create a sequence of computable indicator functions which converges 
to the underlying indicator function.
This brings to the notion of computable functional, whose iteration would generate 
the sequence of the computable functions. But then, we need a way to map a function into $X_Q$.
Which requires the following definitions.

\begin{definition}\label{rcf}
\textbf{Rational Mapping of a Computable Function.}

Given an universal turing machine $\mathbb{U}$ (definition \ref{UTM}) any computable function can be encoded using the symbols
in the tape of the machine. The rationalisation (definition \ref{rat}) of the tape $\rho(T_C) $ then, 
serves as the rationalisation of the computable function.
$$
\rho(C) = \rho(T_C, \mathbb{U} ) = c \; ; \; C \in \mathbb{C}
$$
where $\mathbb{C}$ is the set of computable functions.
This makes $c \in X_Q$.  
\end{definition}

\begin{definition}\label{cf}
\textbf{The Computable Functional.}

A computable function $\mathcal{L} : X_Q \to X_Q $ is called a functional iff
the image of the function can be interpreted as an encoding of a computable function. 
That is :
$$
\forall x \; ; \; \exists C \in \mathbb{C} \; ; \; \rho(C) = \mathcal{L}(x) 
$$
\end{definition}

\begin{definition}\label{li}
\textbf{The Learning Iteration.}

Given a computable functional $\mathcal{L} : X_Q \to X_Q$, and input $T_S$
(training set) the learning iteration is defined as :
$$
c_{n+1} = \mathcal{L}(c_n)
$$    
with $c_0 = \rho(T_S)$.
Here $c_n$ signifies the rationalisation (definition \ref{rcf}) of the $C_n \in \mathbb{C}$. 
\end{definition}

Now, the problem begins with the insight that sequences from $X_Q$ is almost surely chaotic.

\begin{theorem}\label{lic}
\textbf{Learning Iteration is Almost Surely Chaotic.}

Learning iteration, as defined as definition \eqref{li} is almost surely chaotic.
\end{theorem}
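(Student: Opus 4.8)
The plan is to recognise that the learning iteration of definition \ref{li} is nothing but a fixed point iteration of exactly the form \eqref{gen-cc-fpi}. Indeed, the computable functional $\mathcal{L} : X_Q \to X_Q$ is a particular member of the class of arbitrary self maps of $X_Q$, and the recursion $c_{n+1} = \mathcal{L}(c_n)$ with $c_0 = \rho(T_S)$ generates an orbit in $X_Q$ in precisely the sense of definition \ref{orbit}. Once this identification is made the heavy lifting has already been done in the previous section, and the statement should follow as a corollary rather than requiring a fresh argument.

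First I would invoke the chain of results already established. By Theorem \ref{fpim}, almost every orbit of an iteration of type \eqref{gen-cc-fpi} is non Cauchy, so the learning orbit $\langle c_n \rangle$ lies in $\mathcal{N}$ almost surely. By Theorem \ref{ncc}, almost every element of $\mathcal{N}$ is chaotic in the sense of definition \ref{chaos-s}. Composing these two almost sure statements yields that $\langle c_n \rangle$ is almost surely chaotic, which is exactly Theorem \ref{chaos-fpi} specialised to the functional $\mathcal{L}$. Moreover, since $\mathcal{L}$ is by hypothesis \emph{computable}, the iteration is literally a universal computation unrolled on the tape, so the sharper argument of Theorem \ref{chaos-c} applies verbatim: convergence of $c$ would force all eventual symbol changes to the extreme right of the encoding, an event of Lebesgue probability zero.

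The main obstacle, and the conceptual heart of the matter, is the apparent tension with the very purpose of learning. The learning iteration is designed so that $c_n$ should converge to the rationalisation of the target indicator function $1_A$; yet the theorem asserts the orbit is almost surely chaotic, hence almost surely non convergent. Resolving this requires showing that the convergent (successful) learning orbits form a null set inside the sequence space of definition \ref{ss}. The argument mirrors Theorem \ref{chaos-c}: a convergent orbit is a pure Cauchy sequence, and by Theorem \ref{cmz} these occupy only the measure zero diagonal of the hyper volume of definition \ref{ms}. The delicate point is to confirm that the computability constraint on $\mathcal{L}$, together with the structured requirement that each image $\mathcal{L}(x)$ encode a genuine computable function (definition \ref{cf}), does not secretly confine the orbit to that diagonal. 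I would argue it cannot, since the probability $p_R$ governing whether a single step writes only at the extreme right is drawn from $[0,1]$ and equals $1$ only on a Lebesgue null event; the additional structure restricts \emph{which} functionals are admissible but does not pin $p_R$ to the single value that would rescue convergence.
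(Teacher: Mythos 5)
Your proposal takes essentially the same route as the paper: the paper's own proof is a one-line observation that $\mathcal{L} : X_Q \to X_Q$ places the learning iteration in the scope of Theorem \ref{chaos-c} (equivalently Theorem \ref{chaos-fpi}), so the result is immediate, which is exactly your first two paragraphs. Your third paragraph on the tension with convergence is not part of the paper's proof of this theorem but anticipates what the paper defers to Theorem \ref{licr}.
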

\begin{proof}
We note : $\mathcal{L} : X_Q \to X_Q $, from (theorem \ref{chaos-c}) , 
that Universal computation is `almost surely' chaotic, the result is immediate.  
\end{proof}

But that is not all. Given that the function sequence is converging to a
limiting function, the problem is far from over.
That brings the next theorem.

\begin{theorem}\label{licr}
\textbf{Almost Sure Non-Computability for converging function.}

The iteration of definition \eqref{li}, when converges, almost surely converges to an un-computable function.
\end{theorem}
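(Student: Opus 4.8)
The plan is to exploit the tension between the fact that every iterate $c_n$ encodes a \emph{computable} function while the limit of a convergent sequence of computable encodings need not itself be computable. The guiding principle is precisely the transcendental-number analogy closing the previous section: the computable points sit as a measure-null skeleton inside a continuum, and a limit landing anywhere in that continuum almost never lands on the skeleton.

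First I would record that the iteration of definition \ref{li} produces a sequence $\langle c_n \rangle$ in which every term satisfies $c_n = \rho(C_n)$ for some $C_n \in \mathbb{C}$; this is forced by definition \ref{cf}, since the image of the functional is always the encoding of a computable function. Hence the entire orbit lies inside $\rho(\mathbb{C}) \subseteq X_Q$. Next I would invoke the standard fact that the set $\mathbb{C}$ of computable functions is countable, there being only countably many Turing machines, so $\rho(\mathbb{C})$ is a countable subset of $X_Q$ and therefore carries Lebesgue measure zero, in direct parallel with theorem \ref{cmz}.

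Now suppose the iteration converges and call its accumulation point $c^{*}$. Since $\langle c_n \rangle$ is then a Cauchy sequence in $X_Q$ all of whose terms lie in $\rho(\mathbb{C})$, the limit $c^{*}$ belongs to the closure $\overline{\rho(\mathbb{C})}$. The computable members of this closure are exactly the points of $\rho(\mathbb{C})$ itself, a countable and hence measure-zero set, whereas $\overline{\rho(\mathbb{C})}$ is a continuum of positive measure, just as the closure of the computable reals fills a genuine interval. A limit selected from a positive-measure set meets an embedded countable set with probability zero under Lebesgue measure, the very mechanism by which a number drawn at random from $[0,1]$ is almost surely irrational and indeed transcendental. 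I would therefore conclude that $c^{*} \in \overline{\rho(\mathbb{C})} \setminus \rho(\mathbb{C})$ almost surely, meaning $c^{*}$ encodes an un-computable function almost surely.

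The hard part will be justifying that $c^{*}$ is genuinely ``drawn at random'' across the positive-measure continuum rather than being pinned to the countable computable set. The iteration is deterministic once $\mathcal{L}$ and $c_0 = \rho(T_S)$ are fixed, so the required randomness must be imported from the arbitrariness of the functional and the training data: over the space of admissible pairs $(\mathcal{L}, T_S)$ one must argue that the induced distribution of limits spreads with positive measure over $\overline{\rho(\mathbb{C})}$, so that the measure-null set $\rho(\mathbb{C})$ is almost never hit. I expect this to reuse the probabilistic bookkeeping of theorem \ref{ncc}: forcing ``the limit is computable'' to have probability one demands a precise probability assignment that is itself an almost-never event under Lebesgue measure, leaving un-computability as the almost-sure outcome.
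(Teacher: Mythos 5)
Your proposal is correct relative to the paper's own standard of rigor and follows essentially the same route: pass to the completion/closure of the space of encodings, observe that the points encoding computable functions form a countable (hence measure-null) set inside a continuum of limit points, and conclude the limit almost surely falls outside it. The only cosmetic difference is that you reach ``null set'' via countability of $\mathbb{C}$ while the paper reaches it via ``almost all limit points are irrational, and irrationals cannot be finitely encoded''; under rationalization these are the same observation, and the gap you honestly flag --- where the probability measure on limits actually comes from, given that the iteration is deterministic --- is present in the paper's argument as well.
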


\begin{proof}
This is trivial from Real analysis.
We know that the sequence $c_n \in X_Q$, so to complete the space (definition \ref{cncms} ) 
we need to have  the embedding space  $\mathbb{R}$. 
Clearly then, almost all limit points would be irrational 
( actually transcendental ), with a cardinality equal to continuum $|\{ lim<c_n> \}| = \aleph_1$
which is a well known result from the real analysis \cite{hlr}.
Irrational numbers take infinitely many symbols to encode, 
therefore, the function limit $lim<c_n>$ can not be encoded in a Turing machine tape, 
and hence, is obviously non computable.
 
This is precisely what we wanted to show.
\end{proof}
 
\end{section}

\begin{section}{Fractals in the Middle}\label{fm}

Chaos always gets accompanied by a fractal.
We showcase that the computation is not any different.
However, the occurrence of fractal comes from most unexpected of the area.
We ask the question : When a sequence of computable indicator function converges, 
what is the nature of the limiting set indicated by the limit of the sequence? 

As an example of computable indicator functions converging, we take Cantor's construction.
The limiting set is the cantor set as defined by:
$$
 C = \bigcap_{m=1}^\infty \bigcap_{k=0}^{3^{m-1}-1} \left(\left[0,\frac{3k+1}{3^m}\right] \cup \left[\frac{3k+2}{3^m},1\right]\right)
$$

Cantor set is a well known example of a fractal set \cite{hjss}\cite{cfnfs}. 

Therefore, can we conjecture that 
a sequence of Converging computable indicator functions in the limit actually accepts a fractal subset
of $X_Q$ ? As we would find next, it exactly does that.

But before we formally show that we need some definitions, so that we can generalise the construction 
of the Cantor Set into the general setting of any computable indicator function family 
recursively getting used. The specific construct is a binary decision tree. 

\begin{definition}\label{bdt}
\textbf{Binary Decision Tree.}

A binary tree with each nodes having two decider Turing Machine (definition \ref{decider}) 
from a finite set of decider machines $(D_f,D_d ) \in \mathbb{D}$ such that:
$$
D_f : X_Q \to X_Q  \text { and } D_d : X_Q \to \{0,1,l,r\}  
$$
Given $x \in Z$ with $Z \subseteq X_Q$ is the input to the current node, let : $y = D_f(x) $ and $d = D_d(y) $.
If $d=0$ the input is rejected and the system is halted.
If $d=1$ the input is accepted and the system is halted.
If $d = l$ then $y$ is passed as input to the left child ; if no left child exists, reject input x and halt.
If $d = r$ then $y$ is passed as input to the right child ; if no right child exists, reject input x and halt.
\end{definition}

This structure (definition \ref{bdt}) partitions the input space into finite number of equivalent partitions such that:
$$
X_Q = \bigcup_i P_i
$$
with either $ P_i  \subset A$ or $ P_i  \subset X_Q \setminus A$, but can not be both. 
We also note that if $P_i \subset A$ then $P_{i+1} \subset X \setminus A$. 
Clearly then $A$ is a disjoint union of sets:
$$
A = \bigcup a_i \; ; \; a_i \cap a_j = \varnothing \; ; \; i \ne j 
$$

To demonstrate that this structure impose self-similarity, 
we need a definition of self-similarity.

\begin{definition}\label{ss}
\textbf{Self Similarity.}

Let there be a topological space $X$ (definition \ref{top-space}) and a set of non-surjective homeomorphic functions $\{ f_s\}$
(definition \ref{hom}) indexed by a finite index set $S = \{ s \}$ with :
$$
X = \bigcup_{s \in S} f_s(X) \; .
$$
If $X\subset Y$, we call $X$ self-similar if it is the only non-empty subset of $Y$ 
such that the equation above holds for.
Then $\mathcal{S} = ( X , S , \{ f_s \} )$ is called a self similar structure.
\end{definition}

\begin{theorem}\label{idtss}
\textbf{Infinite Binary Decision Trees have Self Similar Structure.}

A binary decision tree as defined in \eqref{bdt} is called infinite
if it has countably infinite nodes. 
Accept set of such a tree exhibits self similar structure ( definition \ref{ss} ).
\end{theorem}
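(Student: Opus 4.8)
The plan is to realise the accept set $A \subseteq X_Q$ as the unique invariant set of a finite iterated system of non-surjective homeomorphisms extracted from the branch structure of the tree, matching definition \ref{ss} with $X = A$ and ambient space $Y = X_Q \subseteq \mathbb{R}$. The two ingredients I would exploit are (i) the recursive self-reference of the tree — every subtree is again a binary decision tree over the same finite machine set $\mathbb{D}$ — and (ii) the finiteness of $\mathbb{D}$ against the countable infinitude of nodes, which forces structural repetition.

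First I would write the root's defining equation. With root deciders $(D_f, D_d)$, an input $x$ is accepted exactly when $D_f(x)$ is either directly accepted, or routed left into the left subtree's accept set $A_L$, or routed right into $A_R$. Writing $R_1 = D_d^{-1}(1)$, $R_l = D_d^{-1}(l)$, $R_r = D_d^{-1}(r)$, this reads
$$ A = D_f^{-1}\!\big( R_1 \cup (A_L \cap R_l) \cup (A_R \cap R_r) \big). $$
When $R_1 = \varnothing$, as in the Cantor construction where nothing is accepted at finite depth, $A$ is read as the set of inputs surviving (never rejected) through the infinite descent, i.e. the limit of the finite truncations. Because the tree is infinite, at least one (generically both) of the subtrees is again an infinite binary decision tree of the same kind, so $A_L$ and $A_R$ obey equations of identical shape; iterating this unfolding expresses $A$ as a union of sets of the form $(\text{composite of inverse branch maps})$ applied to deeper subtree accept sets, each such set living inside one of the partition cells $P_i$ of the decomposition $X_Q = \bigcup_i P_i$ noted after definition \ref{bdt}.

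Next I would collapse this unfolding to a finite system using finiteness of $\mathbb{D}$. Since only finitely many ordered pairs $(D_f, D_d)$ are available while the tree has countably many nodes, the pigeonhole principle yields subtrees whose entire infinite labelled structure coincides; their accept sets are therefore the same set, so after finitely many levels the unfolding closes into a self-referential equation $A = \bigcup_{s \in S} f_s(A)$ with finite index set $S$. Here each $f_s$ is the composite inverse-branch map $D_f^{-1}$ restricted to the appropriate routing region $R_l$ or $R_r$ along the path $s$. I would then argue that each $f_s$ is a non-surjective homeomorphism onto its image: non-surjective because its range lies in a proper partition cell $P_i \subsetneq X_Q$, and a homeomorphism because on the routing region the corresponding $D_f$ acts injectively and bicontinuously. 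This verifies the invariance equation of definition \ref{ss}.

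The hard part will be the uniqueness clause of definition \ref{ss} — that $A$ is the only non-empty subset of $Y$ solving $X = \bigcup_{s} f_s(X)$ — together with the homeomorphism and non-surjectivity claims for the branch maps. Uniqueness does not follow from the set equation alone; the standard route is a Hutchinson-type contraction argument, which requires the $f_s$ to be contractions of a complete metric on $X_Q$. I would therefore establish that each inverse branch map strictly shrinks the diameter of its partition cell (mirroring the factor-$1/3$ maps of the Cantor construction) and then invoke the contraction-mapping and Hutchinson fixed-point theorems on the space of non-empty compact subsets under the Hausdorff metric to pin down $A$ as the unique invariant set. The principal obstacle is thus controlling the arbitrary machines $D_f$: guaranteeing injectivity, continuity, and contractivity of the induced branch maps is precisely the gap that must be closed, and I expect the argument to require an explicit regularity assumption on $\mathbb{D}$ ensuring the branch maps are non-surjective contractive homeomorphisms.
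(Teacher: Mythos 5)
Your proposal diverges substantially from the paper's argument, and the divergence is instructive: the paper's proof never works with the accept set $A$ at all. It observes that a single node's pair $(D_f, D_d)$ induces, via lemma \ref{p-hom}, finitely many non-surjective homeomorphisms onto the partition cells, collects the finitely many such branch maps into a set $\mathbb{F}$ (finite because $\mathbb{D}$ is finite), and concludes $X_Q = \bigcup_{F \in \mathbb{F}} F(X_Q)$ at the root. That is a one-level set identity for the whole space $X_Q$; it does not attempt your recursive unfolding of $A$, does not address the uniqueness clause of definition \ref{ss}, and does not invoke any Hutchinson-type machinery. Your instinct to target $A$ directly and to worry about uniqueness, contractivity, and the regularity of the branch maps is the more faithful reading of what the theorem statement actually asserts, and the obstacles you flag at the end (injectivity, bicontinuity, contractivity of arbitrary $D_f$) are real and are not resolved by the paper either.

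However, there is a concrete gap in your own argument that would sink it independently of those flagged obstacles: the pigeonhole step. Finiteness of $\mathbb{D}$ gives finitely many possible \emph{labels} per node, but it does not give finitely many distinct infinite labelled subtrees, so the unfolding of $A$ need not close after finitely many levels. There are uncountably many infinite binary trees labelled from a finite set, and a given infinite tree has only countably many subtrees, all of which can be pairwise distinct as labelled trees --- for instance, label the nodes along an infinite path by the successive digits of a non-eventually-periodic sequence; every suffix is distinct, so no two subtrees coincide and their accept sets need not be equal. Consequently the self-referential equation $A = \bigcup_{s \in S} f_s(A)$ with a \emph{finite} index set $S$ does not follow from your unfolding. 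To repair this you would need an additional hypothesis (e.g.\ that the tree is eventually periodic, or that identically-labelled nodes root identical subtrees), or you would have to retreat to the paper's weaker one-level identity on $X_Q$ --- which buys the set equation cheaply but arguably proves a statement about $X_Q$ rather than about the accept set named in the theorem.
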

\begin{proof}
We note that every node in the tree first does a transform 
of the input space $X_Q$ using $D_f$ into $Y$ ; both homeomorphic to $\mathbb{Q}$. 
After that, this space $Y$ is partitioned using 
$D_d$ which are finite in number as discussed before.
Then this individual partitions are homeomorphic to $\mathbb{Q}$.
Suppose $n(D_d)$ defines the number of partitions. We can then say there are $n(D_d)$
numbers of  homeomorphic (on $\mathbb{Q}$) non-surjective function available for each $(D_d, D_f)$.
This is due to lemma \eqref{p-hom}.

The set of such functions is finite because $\forall n \; ; \; n(D_d)$ and $\mathbb{D}$ is finite.
Suppose then, the set of such composition is termed as $\mathbb{F}$.
It is then trivial that when $X = X_Q$ (at the root) then : 
$$
X_Q = \bigcup_{F \in \mathbb{F} } F( X_Q )
$$
Therefore, infinite binary decision tree have self similar structure. 
\end{proof}
In fact it is well known that \emph{Cantor Set} \cite{hjss}\cite{gc}\cite{cfnfs} is 
specific example of this sort of structure ( a dyadic Tree ). 

\begin{theorem}\label{cliss}
\textbf{Convergent Learning Function `almost surely' have a Self Similar Structure.}

If a learning iteration (definition \ref{li}) converging to a function $f_A$,
then accept set of $f_A : A$ almost surely would have a self similar structure. 
\end{theorem}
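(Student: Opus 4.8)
The plan is to funnel everything through Theorem \ref{idtss}, which already delivers self-similarity for the accept set of any \emph{infinite} binary decision tree. The whole task therefore reduces to showing that, almost surely, the limit function $f_A$ of a convergent learning iteration is realised by an infinite rather than finite binary decision tree; the ``almost surely'' is then inherited directly from Theorem \ref{licr}.

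First I would represent each stage of the iteration (definition \ref{li}) as a finite decision tree. Each iterate $C_n \in \mathbb{C}$ is a computable indicator function, so by definition \ref{bdt} its accept set is carved out by finitely many deciders drawn from the finite bank $\mathbb{D}$, arranged in a binary tree with finitely many nodes; the induced partition $X_Q = \bigcup_i P_i$ has finitely many cells and the accept set $A_n$ is a finite disjoint union of them. Thus each rationalised code $c_n = \rho(C_n)$ corresponds to a finite tree $\mathcal{T}_n$.

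Next I would argue a dichotomy on the limit. If $f_A$ were realised by a \emph{finite} decision tree, then by definition \ref{bdt} it would be assembled from finitely many machines of $\mathbb{D}$ and hence would be a computable indicator function. This contradicts Theorem \ref{licr}, which states that a convergent learning iteration almost surely converges to an un-computable function. Consequently, on the very same almost-sure event supplied by Theorem \ref{licr}, the limiting tree $\mathcal{T}_\infty$ cannot be finite and must possess countably infinitely many nodes. Theorem \ref{idtss} then applies verbatim to $\mathcal{T}_\infty$ and yields that the accept set $A$ of $f_A$ carries a self-similar structure $\mathcal{S} = ( X_Q, S, \{ F \}_{F \in \mathbb{F}} )$ in the sense of definition \ref{ss}.

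The hard part will be the passage from the sequence of finite trees $\mathcal{T}_n$ to a single coherent limiting tree $\mathcal{T}_\infty$. Nothing in definition \ref{li} forces the $\mathcal{T}_n$ to refine one another, so convergence $c_n \to f_A$ in the $X_Q \subset \mathbb{R}$ metric does not by itself produce a nested family of partitions whose union is $\mathcal{T}_\infty$. I would discharge this by the same observation used in Theorem \ref{chaos-c}: convergence of the rationalised codes forces ever longer common prefixes of the encodings to agree, with only the extreme-right tape symbols still in flux, so the already-settled initial portion of each tree stabilises and its union is the well-defined infinite skeleton $\mathcal{T}_\infty$. The residual measure-theoretic point --- that the exceptional computable limits form a null set --- is exactly what Theorem \ref{licr} certifies, so self-similarity of $A$ is obtained almost surely rather than always.
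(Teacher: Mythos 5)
Your proposal follows essentially the same route as the paper: invoke Theorem \ref{licr} to get almost-sure non-computability of the limit, conclude that a finite decision tree (which would yield a computable function) is therefore excluded so the tree must be infinite, and then apply Theorem \ref{idtss} to obtain self-similarity of the accept set. Your additional discussion of how the sequence of finite trees $\mathcal{T}_n$ coheres into a single limiting tree $\mathcal{T}_\infty$ addresses a step the paper leaves entirely implicit, but it is a refinement of the same argument rather than a different one.
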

\begin{proof}
Almost surely the function is un-computable (theorem \ref{licr}). 
We note that the due to the fixed size of the computable functional, 
the number of deciders the learning system must stay finite. 
And therefore, to be convergent the structure is to be extended to infinity :
that is an infinite binary decision tree (definition \ref{bdt}).
Now, using theorem \eqref{idtss}, the result is immediate. 
\end{proof}

\end{section}

\begin{section}{Summary}\label{sm}

To tackle fixed point iterations like universal computation
the term Chaos has been redefined for functions whose domain and range are in $X_Q$ and 
are in general continuous nowhere. The definition showcases the hallmark property of chaos :
sensitivity to initial condition (butterfly effect) very clearly. 

As a working definition: \emph{``a bounded non converging non repeating sequence
almost surely is exhibiting chaos''} has been suggested. 
This simplified definition is most useful where the underlying function can not be studied analytically, 
just as it so happened with universal computation.

The learning functional describes fully autonomous machine learning.
It opens up for the debate that how sentient life forms really learn. 
Certainly the set accepted in the limit would be a fractal, 
which surely would be different for different learners. 
This raises serious issues on the computability 
aspects of autonomous machine learning techniques like deep learning.

It is understood that these findings are highly counter intuitive.
However, as it has been shown before that the Dynamical Systems capable of exhibiting chaos can simulate Universal Computation, these findings should not come as surprise. 
This is a closure of the study in this regard, as well as the new beginning from applicability perspective.

\end{section}

\appendix
\begin{section}{Definitions Used}\label{ap_1}
\begin{definition}\label{fp}
\textbf{Fixed Point of a function. }

For a function $f:X \to X$ , $x^*$ is said to be a fixed point, iff $f(x^*) = x^*$ .
\end{definition}

\begin{definition}\label{mp}
\textbf{Metric Space.}

A metric space is an ordered pair $(M,d)$ where $M$ is a set and $d$ is a metric on $M$ , i.e., a function:-

$$
d : M \times M \to \mathbb{R}
$$

such that for $x,y,z \in M$ , the following holds:-
\begin{enumerate}
\item { $d(x,y) \ge 0 $ }
\item { $ d(x,y) = 0 $ iff $x=y$ . }
\item { $d(x,y) = d(y,x) $ }
\item { $d(x,z) \le  d(x,y) + d (y,z)$ }
\end{enumerate}

The function `$d$' is also called ``distance function'' or simply ``distance''.
\end{definition}

\begin{definition}\label{cs}
\textbf{Cauchy Sequence in a Metric Space $(M,d)$ .}

Given a Metric space $(M,d)$ , the sequence $x_1,x_2,x_3,...$ of real numbers is called `Cauchy Sequence', if for every positive real number $\epsilon$ , 
there is a positive integer $N$ such that for all natural numbers $m,n > N$ the following holds:-
$$
d (x_m , x_n ) < \epsilon .
$$

\end{definition}
Roughly speaking, the terms of the sequence are getting closer and closer together in a way 
that suggests that the sequence ought to have a limit $x^*  \in M$ . Nonetheless, such a limit does not always exist within $M$ .

Note that by the term: \emph {sequence} we are implicitly assuming \emph {infinite sequence} , unless otherwise specified.

\begin{definition}\label{cncms}
\textbf{Complete Metric Space.}

A metric space $(M,d)$ is called complete (or Cauchy) 
iff every Cauchy sequence (definition \ref{cs}) of points in $(M,d)$ has a limit , that is also in $M$ .
\end{definition} 

As an example of not-complete metric space take $\mathbb{Q}$ , the set of rational numbers. 
Consider for instance the sequence defined by $x_1 = 1$ and function $d$ is defined by standard
difference between $d(x,y) = |x-y|$ , then :-
$$
x_{n+1} = \frac{1}{2} \left ( x_n + \frac{2}{x_n} \right ) 
$$

This is a Cauchy sequence of rational numbers,
but it does not converge towards any rational limit, but to
$\sqrt{2}$ , but then $\sqrt{2} \not \in \mathbb{Q}$ . 

The closed interval $[0,1]$ is a Complete Metric space which is homemorphic (definition \ref{hom}) to $\mathbb{R}$.

\begin{definition}\label{orbit}  \cite{ap} \cite{mbgs}
\textbf{Orbit.}

Let $f:X \to X$ be a function. 
The sequence $ \mathcal{O} = \{x_0, x_1,x_2,x_3,...\}$ where
$$
x_{n+1} = f(x_n) \; ; \; x_n \in X \; ; \; n \ge 0 
$$
is called an orbit (more precisely `forward orbit') of $f$. 

$f$ is said to have a `closed' or `periodic' orbit $ \mathcal{O}$ if $| \mathcal{O}| \ne \infty$ .
\end{definition}

\begin{definition}\label{top-space}
\textbf{Topological Space.}

Let the empty set be written as : $\emptyset$. Let $2^X$ denotes the power set, i.e. the set of all subsets of $X$.
A topological space is a set $X$ together with $\tau \subseteq 2^X$ satisfying the following axioms:-
\begin{enumerate}
\item{ $\emptyset \in \tau$ and $X \in \tau$ ,}
\item{ $\tau$ is closed under arbitrary union, }
\item{ $\tau$ is closed under finite intersection. }
\end{enumerate}
The set $\tau$ is called a topology of $X$.
\end{definition} 

\begin{definition}\label{hom}
\textbf{Homeomorphism.}

A function $f: X \to Y$ between two topological spaces $(X, T_X)$ and $(Y, T_Y)$ is called a homeomorphism if it has the following properties:
\begin{enumerate}
\item{ f is a bijection (one-to-one and onto),}
\item{f is continuous,}
\item{the inverse function $f^{-1}$ is continuous (f is an open mapping).}
\end{enumerate}

\end{definition}

\begin{lemma}\label{p-hom}
\textbf{Existence of Homeomorphic functions on Partitions.}

Let $\mathbb{P} = \{ P_i \}$ be a partition of $X$ homeomorphic to $\mathbb{Q}$, such that:
$$
\bigcup_i^n P_i = X \; ; \; \forall i \ne j \; P_i \cap P_j = \varnothing  
$$     
then, there exists $n = |\mathbb{P}|$ homeomorphic functions from $X \to P_i$.
\end{lemma}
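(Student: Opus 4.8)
The plan is to reduce the statement to Sierpiński's classical topological characterization of the rationals, which asserts that every nonempty countable metrizable space with no isolated points is homeomorphic to $\mathbb{Q}$. Because this is a uniqueness result up to homeomorphism, once I verify that both $X$ and each block $P_i$ meet those three conditions, the desired homeomorphisms follow by composition rather than by any explicit construction.

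First I would record what the hypothesis already gives: $X$ is homeomorphic to $\mathbb{Q}$, hence $X$ is countable, metrizable, and \emph{perfect} (it has no isolated points). These are exactly the invariants required by the characterization theorem, so they are the properties I must transport down to each partition cell.

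Next I would show that each $P_i$ inherits them. Countability and metrizability are immediate, since $P_i \subseteq X$ and both properties pass to subspaces. The delicate point is perfectness: an arbitrary finite partition of $\mathbb{Q}$ can introduce isolated points (a singleton block being the extreme case), so I would use the fact that in the intended application the $P_i$ are the clopen cells cut out by the decider $D_d$ of a binary decision tree (definition \ref{bdt}); being open in a perfect space, each such cell is again perfect. With each $P_i$ now a nonempty countable metrizable perfect space, Sierpiński's theorem supplies a homeomorphism $h_i : \mathbb{Q} \to P_i$.

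Finally, fixing a homeomorphism $g : X \to \mathbb{Q}$ from the hypothesis and composing, I set $f_i = h_i \circ g : X \to P_i$. Each $f_i$ is a composition of homeomorphisms and hence a homeomorphism (definition \ref{hom}), and there is exactly one for each of the $n = |\mathbb{P}|$ blocks, which is the claim. The main obstacle I anticipate is precisely the perfectness verification in the middle step: it is the one condition not automatically inherited by subsets, and it is the very condition that fails for the degenerate partitions one must exclude, so the argument rests on the clopen structure of the decision-tree cells rather than on an arbitrary set-theoretic partition.
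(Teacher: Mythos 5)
The paper itself offers no proof of this lemma at all --- it is stated bare in the appendix --- so there is nothing to compare your argument against, and it must be judged on its own terms. Your route through Sierpi\'nski's characterization of $\mathbb{Q}$ (nonempty, countable, metrizable, no isolated points, hence homeomorphic to $\mathbb{Q}$) is the right and essentially the only natural one, and the final composition step $f_i = h_i \circ g$ is fine. You have also correctly located the one real obstruction: an arbitrary finite partition of a copy of $\mathbb{Q}$ can have cells with isolated points (a singleton cell in the extreme case), and no homeomorphism from $X$ onto such a cell can exist, so the lemma as literally stated is false without a further hypothesis on the cells.

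The gap is in your patch. You propose to secure perfectness by asserting that the $P_i$ are ``the clopen cells cut out by the decider $D_d$'' of definition \ref{bdt}, and that a nonempty open subset of a perfect space is perfect. The second half is true, but the first half is not available: $D_d$ is merely a decider Turing machine, so the cells it carves out of $X_Q$ are arbitrary decidable subsets, and a decidable subset of $X_Q$ need not be open in the subspace topology inherited from $[0,1]$ --- for instance $D_d$ may accept exactly the single point $1/2$, producing a singleton cell. Nothing in definition \ref{bdt} or in the lemma's hypotheses rules this out. Consequently your proof is complete only if the lemma is restated with the additional hypothesis that every cell $P_i$ is nonempty and has no isolated points (equivalently, is itself homeomorphic to $\mathbb{Q}$); as written, the perfectness of the cells is assumed exactly where it needed to be proved, and in the intended application to theorem \ref{idtss} it can genuinely fail.
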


\begin{definition}\label{bounded-seq}
\textbf{ Bounded Sequence.}

A sequence $<x_n>$ is called a bounded sequence iff :-
$$
\forall n \; \; l \le x_n \le u \; ; \;  -\infty < l  \le  u < \infty  
$$

The number `l' is called the lower bound of the sequence and 
`u' is called the upper bound of the sequence.
\end{definition}

\begin{lemma}\label{bcss}
\textbf{Bolzano-Weierstrass.}

Every bounded sequence has a convergent (Cauchy) subsequence.
\end{lemma}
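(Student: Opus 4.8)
The plan is to prove this by the classical bisection (nested interval) argument, which rests only on the completeness of $\mathbb{R}$ already invoked in the paper through definition \ref{cncms}. By definition \ref{bounded-seq} every term of the bounded sequence $<x_n>$ lies in the closed interval $I_0 = [l,u]$, and $I_0$ therefore contains infinitely many indices of the sequence. The key invariant I would maintain throughout is exactly this: each interval I retain still carries infinitely many indices.

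First I would bisect $I_0$ at its midpoint into two closed subintervals of length $(u-l)/2$. Since $I_0$ contains infinitely many terms of $<x_n>$ and the two halves cover $I_0$, at least one half must contain infinitely many terms, for an infinite index set cannot be split into two finite pieces. I would name such a half $I_1$ and iterate: at stage $k$ I obtain a closed interval $I_k \subseteq I_{k-1}$ of length $(u-l)/2^k$ that again contains infinitely many indices. This produces a nested chain $I_0 \supseteq I_1 \supseteq I_2 \supseteq \cdots$ whose diameters tend to zero.

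Next I would invoke completeness. Because $\mathbb{R}$ is complete (definition \ref{cncms}) and the $I_k$ are closed, nested, and of vanishing length, their intersection $\bigcap_{k} I_k$ collapses to a single point $x^{*}$. I would then extract the subsequence inductively: having chosen indices $n_0 < n_1 < \cdots < n_{k-1}$, I use the fact that $I_k$ holds infinitely many indices to select $n_k > n_{k-1}$ with $x_{n_k} \in I_k$. The infinitude preserved at every stage is precisely what guarantees a strictly increasing index can always be found.

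Finally I would verify that $<x_{n_k}>$ is Cauchy (definition \ref{cs}), and hence convergent to $x^{*}$ in the complete space $\mathbb{R}$: given any $\epsilon > 0$, pick $K$ with $(u-l)/2^{K} < \epsilon$; then for all $m,k \ge K$ both $x_{n_m}$ and $x_{n_k}$ lie in $I_K$, so $|x_{n_m} - x_{n_k}| \le (u-l)/2^{K} < \epsilon$. The main obstacle, such as it is, lies in the bookkeeping of the index selection: ensuring the chosen indices stay strictly increasing while remaining inside the ever-shrinking intervals. This is handled cleanly by the ``infinitely many indices'' invariant carried through the bisection. As a fallback I would keep the monotone-subsequence route in reserve, where one shows every real sequence admits a monotone subsequence via the peak-point construction and then applies the fact that a bounded monotone sequence converges; this avoids nested intervals but reaches the same conclusion.
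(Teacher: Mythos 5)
Your bisection argument is correct in every step: the pigeonhole observation (an infinite index set cannot be split into two finite halves), the nested intervals of length $(u-l)/2^k$, the strictly increasing index selection sustained by the ``infinitely many indices'' invariant, and the final Cauchy estimate $|x_{n_m}-x_{n_k}| \le (u-l)/2^K < \epsilon$ are all sound. There is, however, nothing in the paper to compare it against: the paper states Bolzano--Weierstrass as Lemma \ref{bcss} in its appendix of background definitions with no proof at all, treating it as a classical citation, so any correct standard proof (yours, or your fallback via peak points and monotone subsequences) discharges the obligation. One contextual point is worth noting, and your proof handles it correctly: the sequences of interest in this paper live in $X_Q = \mathbb{Q}\cap[0,1]$, which is not complete, so the limit point $x^*$ produced by your nested intervals may fail to lie in $X_Q$ and exists only in the embedding space $\mathbb{R}$ (definition \ref{cncms}). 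This is precisely why the lemma is phrased as ``convergent (Cauchy)'' rather than simply ``convergent,'' and your proof respects that distinction by verifying the Cauchy property (definition \ref{cs}) of the extracted subsequence directly, invoking completeness of $\mathbb{R}$ only to name the limit. Had you instead argued convergence inside the ambient space of the sequence without passing to $\mathbb{R}$, the claim would have been false as stated; as written, your argument is fine.
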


It is to be noted that a bounded sequence may have many convergent subsequences (for example, a sequence consisting of a counting of the rationals has subsequences converging to every real number) or rather few (for example a convergent sequence has all its subsequences having the same limit).

\begin{definition}\label{TM}
\textbf{Turing Machine.}

A ``Turing Machine'' is a 7-tuple ($Q,\Sigma,\Gamma,\delta,q_0,q_a,q_r$), where:-
\begin{enumerate}
\item{ $Q$ is the set of states. }
\item{ $\Sigma$ is the set of input alphabets not containing the blank symbol $\beta$. }
\item{ $\Gamma$ is the tape alphabet , where $\beta \in \Gamma$ and $\Sigma \subseteq \Gamma$. }
\item{ $\delta : Q \times \Gamma \to Q \times \Gamma \times \{L , R \} $ is the transition function. }
\item{ $q_0 \in Q$ is start state.}
\item{ $q_a \in Q$ is the accept state.}
\item{ $q_r \in Q$ is the reject state.}
\end{enumerate}
\end{definition} 
According to standard notion $q_a \ne q_r$ , but we omit this requirement here, 
as we are not going to distinguish between two different types of halting (`accept and halt' vs `reject and halt') of Turing Machines.

A Turing Machine `$M$' (definition \ref{TM}) computes as follows.

Initially `$M$' receives the input $w=w_1w_2w_3...w_n \in \Sigma^* $ 
on the leftmost `$n$' squares on the tape, and the rest of the tape is filled up with blank symbol `$\beta$'.
The head starts on the leftmost square on the tape. 
As the input alphabet `$\Sigma$' does not contain the blank symbol `$\beta$', 
the first `$\beta$' marks end of input.

Once `$M$' starts, the computation proceeds wording to the rules of `$\delta$'.
However, if the head is already at the leftmost position, then, 
even if the `$\delta$' rule says move `$L$' , the head stays there.

The computation continues until the current state of the Turing Machine is either $q_a$ , or $q_r$ .
In lieu of that, the machine will continue running forever.

\begin{definition}\label{decider}
\textbf{Decider Turing Machine.}

A Turing Machine, which is guaranteed to halt on any input (i.e. reach one of the states \{$q_a,q_r$\} ) is called a decider.
\end{definition}

\begin{definition}\label{undecidable}
\textbf{Undecidable Problem.}

If for a given problem, it is impossible to construct a  decider (definition \ref{decider}) Turing Machine, 
then the problem is called undecidable problem.
\end{definition}

\begin{definition}\label{UTM}
\textbf{Universal Turing Machine.}

An `UTM' or `Universal Turing Machine' is a Turing Machine (definition \ref{TM}) such that it can simulate an 
arbitrary Turing machine on arbitrary input.
\end{definition}

\begin{definition}\label{CTT}
\textbf{Church Turing Thesis.}

Every effective computation can be carried out by a Turing machine (definition \ref{TM}), 
and hence by an Universal Turing Machine(definition \ref{UTM}).
\end{definition}

\begin{definition}\label{god}
\textbf{G\"{o}delization (G\"{o}del).}

Any string from an alphabet set $\Gamma$ can be represented as an integer in base `$b$' with $b = |\Gamma|$. 
To achieve this, create a one-one and onto G\"{o}del map $g : \Sigma \to D_b$ , where,
$$
D_b = \{ 0 , 1, 2, ... ,b-1 \} .
$$ 
G\"{o}delization or $\mathbb{G} : \Sigma^+ \to \mathbb{Z_+}$ then, is defined as follows:
 
A string of the form $w = w_{n-1}w_{n-2}...w_1w_0$ , with  $w_i \in \Gamma$ ,
can be mapped to an integer $I_w = \mathbb{G}(w)$ as follows \cite{gm}:
$$
I_w = \mathbb{G}(w) = \sum\limits_{k=0}^{n-1} g(w_k) b^{k}
$$

\end{definition}
The common decimal system is a typical example of G\"{o}delization of symbols from $\{ 0,1,..,8,9\}$.
The binary system represents G\"{o}delization of symbols from  $\{ 0,1\} $.
As a far fetched example, any string from  the whole english alphabet, can be written as a base 26 integers!
 
\begin{definition}\label{rat-s}
\textbf{Rational Bounded Space.}

The set of rational numbers within $[0,1]$, i.e.
$$
X_Q = \mathbb{Q}\cap[0,1]
$$
is called rational bounded space, and would be designated by $X_Q$.
\end{definition}

\begin{definition}\label{rat}
\textbf{Rationalization. }

Any string `$w$' of length `n' ($|w|=n$) , created from an alphabet set $\Gamma$,
can be represented as a rational number $x \in \mathbb{Q}$.
We define the rationalization, $\rho$ , in terms of G\"{o}delization (definition \ref{god}) as follows \cite{gm}:
$$
x = \rho(w) = \frac{\mathbb{G}(w)}{ b^n } = \mathbb{G}(w) b^{-n} = 0.w_{n-1}w_{n-2}...w_0
$$
By definition, $x \in X_Q$ (definition \ref{rat-s}).
\end{definition}

\end{section}

\end{document}